\newcommand\mytoday{\number\year-\ifcase\month\or 01\or 02\or 03\or 04\or 05\or 06\or 07\or 08\or 09\or 10\or 11\or 12\fi-\ifcase\day\or 01\or 02\or 03\or 04\or 05\or 06\or 07\or 08\or 09\or 10\or 11\or 12\or 13\or 14\or 15\or 16\or 17\or 18\or 19\or 20\or 21\or 22\or 23\or 24\or 25\or 26\or 27\or 28\or 29\or 30\or 31\fi} %see LaTeX book pg. 348
\newcolumntype{d}[2]{D{.}{.}{#1.#2}} %aligns the entry at "." of a column if "d" is used as column type where the arguments specify the number of digits to the left and right for which space is kept in a column (always set to the maximal numbers that appear in the whole table)
\newcommand*{\abstractnoindent}{} %define abstract such that it has no indent
\let\abstractnoindent\abstract
\renewcommand*{\abstract}{\let\quotation\quote\let\endquotation\endquote
\abstractnoindent}
\renewcommand{\p@enumii}[1]{\theenumi(#1)}
\theoremstyle{break} %adds a newline after the heading of a theorem environment
\newtheorem{definition}{Definition}[section] %number all environments in a sequence (for every section) (order does not play a role)
\newtheorem{lemma}[definition]{Lemma}
\newtheorem{theorem}[definition]{Theorem}
\newtheorem{corollary}[definition]{Corollary}
\newtheorem{remark}[definition]{Remark}
\newtheorem{example}[definition]{Example}
\theoremstyle{nonumberbreak} %adds a newline after the heading of a theorem environment which is not numbered
\newtheorem{proof}{Proof}
\renewcommand*{\L}{\mathcal{L}}
\newcommand*{\IP}{\mathbb{P}}
\newcommand*{\IE}{\mathbb{E}}
\newcommand*{\IN}{\mathbb{N}}
\newcommand*{\IQ}{\mathbb{Q}}
\newcommand*{\F}{\mathfrak{F}}
\newcommand*{\G}{\mathcal{G}}
\tikzstyle arrowstyle=[scale=2]
\tikzstyle directed=[postaction={decorate,decoration={markings,
    mark=at position 1 with {\arrow[arrowstyle]{stealth}}}}]
\begin{document}

% ändert die Figure-Überschriften
\renewcommand{\figurename}{Fig.}

\thispagestyle{plain}
	\begin{center}
		{\bfseries\Large Canonical spectral representation for exchangeable max-stable sequences}
		\par\bigskip
		\vspace{1cm}
		
		{\Large Jan-Frederik Mai}\\
		\vspace{0.2cm}
		{Technische Universit\"at M\"unchen}\\
		{Parkring 11, 85478 Garching-Hochbr\"uck, Germany}\\
		{email: mai@tum.de}\\

	\end{center}

\textit{This is a pre-print of an article published in Extremes. The final authenticated version is available online at: https://doi.org/10.1007/s10687-019-00361-3.}

\section*{Abstract}
The set $\mathfrak{L}$ of infinite-dimensional, symmetric stable tail dependence functions associated with exchangeable max-stable sequences of random variables with unit Fr\'echet margins is shown to be a simplex. Except for a single element, the extremal boundary of $\mathfrak{L}$ is in one-to-one correspondence with the set $\F_1$ of distribution functions of non-negative random variables with unit mean. Consequently, each $\ell \in \mathfrak{L}$ is uniquely represented by a pair $(b,\mu)$ of a constant $b$ and a probability measure $\mu$ on $\F_1$. A canonical stochastic construction for arbitrary exchangeable max-stable sequences and a stochastic representation for the Pickands dependence measure of finite-dimensional margins of $\ell$ are immediate corollaries. As by-products, a canonical analytical description and an associated canonical Le Page series representation for non-decreasing stochastic processes that are strongly infinitely divisible with respect to time are obtained. 

\textbf{Keywords:} exchangeable sequence, max-stable sequence, stable tail dependence function, extreme-value copula, strong IDT process, Pickands representation

\section{Motivation and mathematical background}
Before we start, we clarify some notation. We will denote the indexing argument of a stochastic process $f$ as a subindex by writing $f_t$, instead of $f(t)$. This is in order to emphasize that $f_t$ is a random variable, whereas when writing $f(t)$ we mean the (non-random) value of a deterministic function $f$ in the variable $t$. Further, for $x>0$ we define $x/0:=\infty$ in order to simplify notation. We denote by $[0,\infty)_{00}^{\IN}$ the set of sequences $\vec{t}=(t_1,t_2,\ldots)$ with non-negative members that are eventually zero, i.e.\ $t_k=0$ for almost all $k \in \IN$. 
\par
An (infinite) sequence of random variables is called \emph{exchangeable} if its probability distribution is invariant with respect to permutations of finitely many, but arbitrarily many, constituents, see \cite{aldous85} for a textbook treatment. Let $\vec{Y}=(Y_1,Y_2,\ldots)$ be an exchangeable sequence of random variables with $\IE[Y_1]=1$ and such that $\min_{k \geq 1}\{Y_k/t_k\}$ has an exponential distribution with rate $\ell(\vec{t})$ for arbitrary $\vec{t} \in [0,\infty)_{00}^{\IN}$ except the zero sequence (or this one included when interpreting the exponential distribution with rate zero as an atom at $\infty$). The sequence $\vec{Y}$ is said to be \emph{min-stable multivariate exponential} and its law is uniquely described by the function $\ell$, which is called the \emph{stable tail dependence function} of the sequence $\vec{Y}$. Indeed,
\begin{gather*}
\IP(\vec{Y}>\vec{t}) = \IP\big(\min_{k \geq 1}\{Y_k/t_k\}>1\big)= \exp\big\{ -\ell(\vec{t})\big\},\quad \vec{t} \in[0,\infty)_{00}^{\IN}, 
\end{gather*}
with the first ``$>$''-sign understood componentwise. The function $\ell$ is symmetric in its arguments by exchangeability of $\vec{Y}$. Our goal is to determine the shape of the set of all symmetric stable tail dependence functions $\ell:[0,\infty)_{00}^{\IN} \rightarrow [0,\infty)$, which we denote by $\mathfrak{L}$ in the sequel. The sequence $1/\vec{Y}=(1/Y_1,1/Y_2,\ldots)$ is called \emph{max-stable with unit Fr\'echet margins}, which makes the probability law of $\vec{Y}$ interesting for the field of multivariate extreme-value theory, for background the interested reader is referred to \cite{segers12} and \cite[Chapter 6]{joe97}. Concretely, the set of all $d$-variate functions $C_{\ell}:[0,1]^d\rightarrow [0,1]$, defined by
\begin{gather*}
C_{\ell}(u_1,\ldots,u_d):=\exp\Big\{ - \ell\big( -\log(u_1),\ldots,-\log(u_d),0,0,\ldots\big)\Big\},\quad \ell \in \mathfrak{L},
\end{gather*}
constitutes a (proper) subfamily of exchangeable, $d$-dimensional \emph{extreme-value copulas}\footnote{For background, the interested reader is referred to \cite{gudendorf09}.}. More precisely, an exchangeable $d$-dimensional extreme-value copula $C$ is of the form $C_{\ell}$ if and only if there is an infinite exchangeable sequence $\vec{U}=(U_1,U_2,\ldots)$ on some probability space $(\Omega,\G,\IP)$ such that $C$ equals the distribution function of $(U_1,\ldots,U_d)$. By virtue of De Finetti's Theorem this is the case if and only if there exists a sub-$\sigma$-algebra $\mathcal{T} \subset \G$ such that $U_1,\ldots,U_d$ (and $U_{d+1},\,U_{d+2},\ldots$ as well) are independent and identically distributed (iid) conditioned on $\mathcal{T}$. We recall from \cite[p.\ 26, Corollary 3.12]{aldous85} that $\mathcal{T}$ almost surely coincides with the tail-$\sigma$-field $\cap_{k \geq 1}\sigma(U_k,\,U_{k+1},\ldots)$ of $\vec{U}$. It is educational to remark, however, that there are exchangeable $d$-dimensional extreme-value copulas that are not of the form $C_{\ell}$, because in general the notion of ``infinite exchangeability'' (or, more loosely, ``conditionally iid'') is stronger than that of finite ($d$-dimensional) exchangeability. In analytical terms, a $d$-margin of some $\ell \in \mathfrak{L}$ is always a symmetric stable tail dependence function, but not every symmetric, $d$-variate stable tail dependence function is a $d$-margin of some $\ell \in \mathfrak{L}$. 
\par
It is well known at least since \cite{dehaan84} that $\ell$ can be represented as
\begin{gather}
\ell(\vec{t}) = -\log\big\{ \IP(\vec{Y}>\vec{t}) \big\}=\IE\Big[ \max_{k \geq 1}\{t_k\,X_k\}\Big],
\label{spectralrepr}
\end{gather}
for some sequence $\vec{X}=(X_1,X_2,\ldots)$ of random variables with finite means\footnote{Even though we are only interested in distributional statements throughout, for the sake of a more intuitive exposition we find it sometimes convenient to express formulas like (\ref{spectralrepr}) in probabilistic notation with probability measure $\IP$ and expectation $\IE$ (as compared to writing integrals), with the generic random objects $\vec{X},\vec{Y}$ being viewed as defined on some generic probability space $(\Omega,\G,\IP)$, on which we do not necessarily work.}. As an example\footnote{This is the example on page 1198 in \cite{dehaan84}.} for the representation (\ref{spectralrepr}), if $\vec{Y}$ has independent components which all have the unit exponential distribution, that is $\ell(\vec{t})=\sum_{k \geq 1}t_k$, the probability law of $\vec{X}$ can be defined via a vector of discrete probabilities $\vec{p}=(p_1,p_2,\ldots)$ as
\begin{gather}
\IP\Big(\vec{X} = \frac{\vec{e}_k}{p_k} \Big) = p_k > 0,\quad k \geq 1,\quad \sum_{k \geq 1}p_k=1,
\label{repr_nonunique}
\end{gather}
where $\vec{e}_k$ denotes the sequence with all members equal to zero except for the $k$-th. The representation (\ref{spectralrepr}) of a stable tail dependence function is called a \emph{spectral representation}. As (\ref{repr_nonunique}) shows, it is not unique in general (i.e.\ different $\vec{X}$ can imply the same $\ell$, hence $\vec{Y}$). Furthermore, even though $\vec{Y}$ is assumed to be exchangeable in the present work, $\vec{X}$ needs not be exchangeable and, in fact, the proof in \cite{dehaan84} constructs $\vec{X}$ from $\vec{Y}$ in such a way that $\vec{X}$ is not exchangeable (the particular example (\ref{repr_nonunique}) demonstrates this). Conversely, however, the spectral representation (\ref{spectralrepr}) can be used to construct models for $\vec{Y}$ by choosing convenient models for $\vec{X}$ that allow the expected value in (\ref{spectralrepr}) to be computed in closed form, as highlighted in \cite{segers12}. If one pursues this strategy and starts with an exchangeable $\vec{X}$, one obtains an exchangeable sequence $\vec{Y}$, but to the best of our knowledge it is an open question (solved by the present article) whether all exchangeable min-stable multivariate exponential $\vec{Y}$ can be obtained in such a way.  
\par
We denote by $\ell^{(d)}$ the restriction of $\ell$ to the first $d \in \IN$ components, i.e.\ $\ell^{(d)}$ determines the law of $(Y_1,\ldots,Y_d)$. For stable tail dependence functions in finite dimensions, such as $\ell^{(d)}$, there exist different methods to obtain uniqueness of the spectral representation (\ref{spectralrepr}) by imposing certain restrictions on the law of $\vec{X}$. The most prominent one is the \emph{Pickands representation}, named after \cite{pickands81}, see also \cite{dehaan77,ressel13}, which states that if $\ell^{(d)}$ is the stable tail dependence function associated with some min-stable multivariate exponential random vector $\vec{Y}^{(d)}=(Y_1,\ldots,Y_d)$, then there is a random vector $\vec{X}^{(d)}=(X^{(d)}_1,\ldots,X^{(d)}_d)$, uniquely determined in law, which takes values on the $d$-dimensional unit simplex $S_d:=\{\vec{q} \in [0,1]^d\,:\,q_1+\ldots+q_d=1\}$ and satisfying $\IE[X^{(d)}_k]=1/d$ for all $k=1,\ldots,d$, such that
\begin{gather}
 \ell^{(d)}(\vec{t}) = d\,\IE\big[ \max\{t_1\,X^{(d)}_1,\ldots,t_d\,X^{(d)}_d\}\big].
\label{Pickandsrepr}
\end{gather}
In our infinite-dimensional setting, even though we assume exchangeability of $\vec{Y}=(Y_1,Y_2,\ldots)$, an unfortunate aspect of the Pickands representation is that the relation between the laws of $\vec{X}^{(d)}$ and $\vec{X}^{(d+1)}$ is not easy to understand, in particular $\vec{X}^{(d)}$ is not a re-scaled $d$-margin of $\vec{X}^{(d+1)}$, like one might naively hope on first glimpse. Consequently, describing the infinite-dimensional, symmetric stable tail dependence function $\ell$ in terms of the collection of its finite-dimensional Pickands measures is neither easily accomplished nor convenient or algebraically natural.
\par
In the main body of this article, we derive a natural and convenient spectral representation for symmetric stable tail dependence functions. To wit, each $\ell \in \mathfrak{L}$ can be represented as
\begin{gather}
\ell(\vec{t}) = b\,\sum_{k \geq 1}t_k + (1-b)\,\IE\big[ \max_{k \geq 1}\{t_k\,X_k\}\big],\quad \vec{t} \in [0,\infty)^{\IN}_{00},
\label{STDFrepr_prob}
\end{gather}
with a constant $b \in [0,1]$ and an exchangeable sequence $\vec{X}$ satisfying $\IE[X_1]=1$ (hence $\IE[X_k]=1$ for each $k \geq 1$). Whereas the constant $b$ is unique, the law of $\vec{X}$ is still not unique in general, but it becomes unique if we postulate in addition that the conditional mean of $\vec{X}$, that is $\lim_{n \rightarrow \infty}\frac{1}{n}\sum_{k=1}^{n}X_k$, is identically equal to one. In particular, the stable tail dependence function $\ell(\vec{t})=\sum_{k \geq 1}t_k$, corresponding to independent members in $\vec{Y}$, cannot be represented via an exchangeable $\vec{X}$.  However, the canonical representation (\ref{STDFrepr_prob}) shows that the independence case occupies an isolated role in this regard.
\par
By virtue of De Finetti's Theorem, see \cite{definetti31,definetti37} and \cite[p.\ 19 ff]{aldous85}, studying the law of the exchangeable sequence $\vec{Y}$ is tantamount to a study of the law of a random distribution function $F=\{F_t\}_{t \geq 0}$ that is defined by $F_t:=\IP(Y_1 \leq t\,|\,\mathcal{T})$, with $\mathcal{T}=\cap_{k \geq 1}\sigma(Y_k,\,Y_{k+1},\ldots)$ denoting the tail-$\sigma$-field of $\vec{Y}$. A result of \cite{maischerer13} shows that the stochastic process $H:=-\log(1-F)$ is\footnote{\cite{molchanov18} call the ``strong IDT'' processes ``time-stable'' processes, but we prefer to stick with the original nomenclature.} \emph{strongly infinitely divisible with respect to time (strong IDT)}, meaning that
\begin{gather*}
\{H_t\}_{t \geq 0} \stackrel{d}{=} \big\{ H^{(1)}_{\frac{t}{n}}+\ldots+ H^{(n)}_{\frac{t}{n}}\big\}_{t \geq 0},\quad \forall n \in \IN,
\end{gather*}
where $\stackrel{d}{=}$ denotes equality in law and $H^{(i)}$ are independent copies of $H$. Conversely, given a non-decreasing, right-continuous strong IDT process $H$ and an independent sequence $\{\eta_k\}_{k \geq 1}$ of iid unit exponential variables, the exchangeable, min-stable multivariate exponential sequence $\vec{Y}$ can be represented as
\begin{gather}
Y_k:=\inf\{t>0\,:\,H_t>\eta_k\},\quad k \in \IN,
\label{ciiddef}
\end{gather}
establishing a canonical stochastic representation, which is conditionally iid in the sense of De Finetti's Theorem. If $H$ is normalized to satisfy $\IE[\exp(-H_1)]=\exp(-1)$, it follows that $\IE[Y_1]=1$, so that the function 
\begin{gather*}
\ell(\vec{t}):=-\log\{\IP(\vec{Y}>\vec{t})\}=-\log\Big\{\IE\Big[e^{-\sum_{k \geq 1}H_{t_k}}\Big]\Big\},\quad \vec{t} \in [0,\infty)^{\IN}_{00},
\end{gather*}
lies in $\mathfrak{L}$.
\par 
In fact, in our proof of (\ref{STDFrepr_prob}) we rely heavily on the concept of strong IDT processes, for which \cite{molchanov18} recently have derived a convenient series representation, which we make use of. Translating the analytical result (\ref{STDFrepr_prob}) on symmetric stable tail dependence functions into the language of these processes then implies that each non-decreasing strong IDT process is uniquely determined by a triplet $(b,c,\mu)$ of constants $b \geq 0$, $c>0$ and a probability measure $\mu$ on the set of distribution functions of non-negative random variables with unit mean, as we will see. 
\par
Regarding the organization of the remaining article, we prove and discuss the main result (\ref{STDFrepr_prob}) in Section \ref{sec_proof}, and we conclude in Section \ref{sec_concl}.
\section{The structure of $\mathfrak{L}$}\label{sec_proof}
We denote by $\ell_{\Pi}(\vec{t}):=\sum_{k \geq 1}t_k$ the stable tail dependence function associated with an iid sequence of unit exponentials. We denote by $\F$ (resp.\ $\F_1$) the set of all distribution functions of non-negative random variables with finite (resp.\ unit) mean. Then with $F \in \F_1$ the function
\begin{gather*}
\ell_F(\vec{t}):=\int_0^{\infty}1-\prod_{k \geq 1}F\Big( \frac{s}{t_k}\Big)\,\mathrm{d}s,\quad \vec{t} \in [0,\infty)^{\IN}_{00},
\end{gather*}
defines a symmetric stable tail dependence function, which is investigated thoroughly in \cite{mai17}. In this definition, implicitly we mean $F(\infty)=1$ for those $t_k$ that are zero. We remark that $\ell_F$ has the stochastic representation $\ell_F(\vec{t})=\IE[\max_{k \geq 1}\{t_k\,X_k\}]$, where $\vec{X}=(X_1,X_2,\ldots)$ is an iid sequence drawn from $F$. We seek to show that $\mathfrak{L}$ (equipped with the topology of pointwise convergence) is a simplex with extremal boundary $\partial_e \mathfrak{L}=\{\ell_{\Pi}\}\cup \{\ell_F\,:\,F \in \F_1\}$, which is the main contribution of the present work, see Theorem \ref{thm_main} and Corollary \ref{cor_uni} below.
\par
The key idea of the presented proof relies on the aforementioned link to strong IDT processes, found in \cite[Theorem 5.3]{maischerer13}. In a recent article, \cite[Theorem 4.2]{molchanov18} show that a non-negative\footnote{More generally, \cite{molchanov18} consider strong IDT processes without Gaussian component, but only the subclass of non-negative ones is of interest in the present article.}, c\`adl\`ag, stochastically continuous, strong IDT process without Gaussian component admits a LePage series representation
\begin{gather}
\{H_t\}_{t \geq 0} \stackrel{d}{=} \Big\{b\,t+\sum_{k \geq 1}f^{(k)}_{\frac{t}{\epsilon_1+\ldots+\epsilon_k}}\Big\}_{t \geq 0},\quad t \geq 0,
\label{molchanov_repr}
\end{gather}
where $\{f^{(k)}\}_{k \geq 1}$ is a sequence of independent copies of a non-vanishing c\`adl\`ag stochastic process $f=\{f_t\}_{t \geq 0}$ with $f_0=0$ (denote the space of all such functions by $\mathfrak{D}$ in the sequel) and, independently, $\{\epsilon_k\}_{k \geq 1}$ is a list of iid unit exponentials. Furthermore, the process $f$ satisfies 
\begin{gather}
\int_{\mathfrak{D}} \int_0^{\infty}\min\{1,|f(u)|\}\,\frac{\mathrm{d}u}{u^2}\,\gamma(\mathrm{d}f)<\infty,
\label{intcondition}
\end{gather}
with $\gamma$ denoting the probability law of $f$ on $\mathfrak{D}$. In general, the law of $f$ in this representation of a non-negative strong IDT process is non-unique. However, the following auxiliary lemma shows that if $H$ is non-decreasing, we can at least learn that $f$ is non-decreasing as well. This proof is the most technical step towards Theorem \ref{thm_main} below, and it is a result of independent interest as well.
\begin{lemma}[Non-decreasing strong IDT processes] \label{lemma_molchanov}
If $H$ is a non-decreasing, right-continuous strong IDT process, the stochastic process $f$ of any LePage series representation (\ref{molchanov_repr}) is necessarily non-decreasing and $b \geq 0$.
\end{lemma}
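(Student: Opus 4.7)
The plan is to study the Laplace transform of the increments of $H$ through the LePage representation, since this simultaneously constrains the drift $b$ and the law $\gamma$ of $f$.

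First I would rewrite (\ref{molchanov_repr}) as a Poisson integral: because $\Gamma_k := \epsilon_1+\ldots+\epsilon_k$ are the arrival times of a unit-rate Poisson process, the sequence $\{u_k := 1/\Gamma_k\}_{k \geq 1}$ forms a Poisson point process on $(0,\infty)$ with intensity $du/u^2$, and together with the $f^{(k)}$ the collection $\{(u_k,f^{(k)})\}_{k \geq 1}$ is a Poisson point process on $(0,\infty)\times\mathfrak{D}$ with intensity $\mu := \frac{du}{u^2}\otimes\gamma$. Next, for fixed $0 \leq s < t$ I would set $g(u,f) := f(tu)-f(su)$; the change of variables $v = tu$ together with (\ref{intcondition}) shows $\int \min\{1,|g|\}\,d\mu < \infty$, so the Poisson sum $\sum_k g(u_k,f^{(k)})$ converges absolutely and Campbell's formula gives, for every $\lambda \geq 0$,
\begin{gather*}
-\log \IE\big[e^{-\lambda(H_t-H_s)}\big] \;=\; \lambda\,b(t-s) + \int \big(1-e^{-\lambda g(u,f)}\big)\,\mu(du,df).
\end{gather*}
The left-hand side is non-negative because $H$ is non-decreasing.

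The heart of the argument is then to let $\lambda \to \infty$. If $\mu(\{g \leq -\eta\}) > 0$ for some $\eta > 0$, the restriction of the integral to $\{g \leq -\eta\}$ is at most $(1-e^{\lambda\eta})\mu(\{g\leq-\eta\})$ and thus diverges to $-\infty$ exponentially, while the positive part of the integrand satisfies $\int(1-e^{-\lambda g})^+\,d\mu \leq \lambda\int\min\{g,1\}\,d\mu$ by the same substitution, a merely linear bound. Hence the right-hand side of the displayed identity would be pushed below $0$, contradicting its non-negativity. Therefore $\mu(\{g < 0\}) = 0$ for every rational $0 \leq s < t$. With $g \geq 0$ $\mu$-a.e., the integrand is non-negative and $(1-e^{-\lambda g})/\lambda \leq \min\{g,1\}$ is $\mu$-integrable; dominated convergence then yields $\int(1-e^{-\lambda g})/\lambda\,d\mu \to 0$ as $\lambda \to \infty$, and dividing the displayed identity by $\lambda$ forces $b(t-s) \geq 0$, hence $b \geq 0$.

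Finally, to promote this $\mu$-a.e.\ monotonicity to pathwise monotonicity of $\gamma$-a.e.\ $f$, I would use Fubini: from $\mu(\{g<0\}) = 0$ it follows that for each rational $0 \leq s < t$, $\gamma$-almost every $f$ has $|\{u > 0: f(tu) < f(su)\}| = 0$. Taking a countable union over rational pairs, $\gamma$-almost every $f$ admits a set $A(f) := \{u > 0 : f(tu) \geq f(su) \text{ for all rational } 0 \leq s < t\}$ of full Lebesgue measure; picking any $u_0 \in A(f)$, $f$ is non-decreasing on the dense subset $u_0\IQ_{\geq 0}$, and right-continuity of $f$ extends this to all of $[0,\infty)$.

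The main technical obstacle is the $\lambda \to \infty$ step: one has to exploit the exponential blow-up produced by a hypothetical negative atom of the Lévy measure, while simultaneously retaining enough uniform integrability (via (\ref{intcondition}) and the scaling substitution) to legitimise Campbell's formula and the subsequent dominated-convergence argument.
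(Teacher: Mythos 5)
Your proof is correct, but it takes a genuinely different route from the paper. The paper argues pathwise: it isolates $f^{(1)}$ by considering $H_{\epsilon_1 t}$, shows via a first-exit-time comparison that for small $\epsilon_1$ the drift and the remaining series are uniformly bounded by $\epsilon/4$ on $[0,x_2]$, and concludes that a drop of $f^{(1)}$ of size $\epsilon$ would force a drop of $H$ with positive probability; the sign of $b$ is then obtained separately by showing, via a Bernstein-function computation and dominated convergence, that the purely jump part contributes no drift to the L\'evy--Khinchin representation of $H_1$. You instead pass entirely to the L\'evy exponent: rewriting the Le~Page sum as a Poisson integral with intensity $\mathrm{d}u/u^2 \otimes \gamma$, setting $g(u,f)=f(tu)-f(su)$, and exploiting that $-\log\IE[e^{-\lambda(H_t-H_s)}]\geq 0$ to rule out $\mu(\{g<0\})>0$ (exponential blow-up of the negative part vs.\ linear growth of the positive part), and then dividing by $\lambda$ to extract $b\geq 0$ from the same identity. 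The Fubini/density/right-continuity step to promote the $\mu$-a.e.\ inequality $g\geq 0$ to pathwise monotonicity of $\gamma$-a.e.\ $f$ is also sound, since $u_0\IQ_{\geq 0}$ is dense and right-continuity pushes the inequality through.

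Your approach is shorter and has the virtue of extracting both conclusions of the lemma from a single Laplace-transform identity, whereas the paper needs two separate arguments. The only point deserving slightly more care is the invocation of the exponential formula for signed integrands $g$: a priori $\int(1-e^{-\lambda g})\,\mathrm{d}\mu$ could be $-\infty$. One way to justify it is to split the Poisson process into the (independent) restrictions to $\{g\geq 0\}$ and $\{g<0\}$ and note that the finiteness of $\IE[e^{-\lambda(H_t-H_s)}]$ (guaranteed because $H_t-H_s\geq 0$) forces $\int_{\{g<0\}}(e^{-\lambda g}-1)\,\mathrm{d}\mu<\infty$, so the formula holds with both sides finite. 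Once this is noted, the contradiction argument is exactly as you state it. Also, the dominating function in your $\lambda\to\infty$ step should be $\min\{g^+,1\}$ or $\min\{|g|,1\}$ rather than $\min\{g,1\}$, a purely notational slip. Finally, note that your argument implicitly assumes stochastic continuity (needed for the Le~Page representation to exist); the paper derives this at the outset from non-decreasingness, right-continuity and the IDT property, whereas you take the representation as a hypothesis, which is fine given the phrasing of the lemma.
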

\begin{proof}
First notice that non-decreasingness, right-continuity, and the strong IDT property imply stochastic continuity of $H$. This follows from the fact that for each fixed $t>0$ the random variable $H_{t-}:=\lim_{x \uparrow t}H_x$ exists in $[0,\infty]$ by non-decreasingness and has the same infinitely divisible law as the random variable $H_t$ (since $\IE[\exp(-x\,H_u)]=\exp\{-t\,\Psi_H(u)\}$ for some Bernstein function\footnote{See \cite{schilling10} for a textbook treatment on Bernstein functions.} $\Psi_H$, see \cite[Lemma 3.7]{maischerer13}). Thus, the non-negative random variable $H_t-H_{t-}$ has zero expectation and is thus identically equal to zero. On the other hand, right-continuity implies that $H_{t+}:=\lim_{x \downarrow 0}H_x=H_t$ almost surely. Consequently, the stochastic continuity assumption of \cite[Theorem 4.2]{molchanov18} is satisfied, hence there is a LePage series representation.
\par
Consider a probability space $(\Omega,\G,\IP)$, on which $H$ is defined by the right-hand side of (\ref{molchanov_repr}). We first prove that $f$ is non-decreasing. The heuristic idea is to consider
\begin{gather*}
H_{\epsilon_1\,t} = f^{(1)}_t+ b\,\epsilon_1\,t+ \sum_{k \geq 2}f^{(k)}_{\frac{\epsilon_1\,t}{\epsilon_1+\ldots+\epsilon_k}},
\end{gather*}
and argue that there is a positive probability that $\epsilon_1$ is so small that $f^{(1)}_t$ is the dominating part on the right-hand side, from which one concludes that a violation of the non-decreasingness of $f^{(1)}$ on some interval $[x_1,x_2]$ would imply a violation of the non-decreasingness of  $H$ on $[x_1/\epsilon_1,x_2/\epsilon_1]$. 
\par
To fill this intuitive idea with some mathematical rigor is what's done in the sequel. We assume a violation of non-decreasingness of $f^{(1)}$, which means that there exists $\epsilon>0$ and $0 \leq x_1<x_2<\infty$ such that $\IP(A_{f})>0$ for the event
\begin{gather*}
A_{f}:=\Big\{ f^{(1)}\mbox{ not non-decreasing on }[x_1,x_2] \mbox{ and }\inf_{x \in [x_1,x_2]}\{f^{(1)}_x-f^{(1)}_{x_1}\} \leq -\epsilon\Big\}.
\end{gather*}
Our goal is to show that this implies a violation of the non-decreasingness of $H$. For later use we define the $\sigma$-algebra $\mathcal{H}:=\sigma(\epsilon_k,f^{(k)}\,:\,k \geq 2)$ generated by all involved stochastic objects except for $\epsilon_1,f^{(1)}$. 
\par
For a moment consider independent copies $\{g^{(k)}\}_{k \geq 1}$ of $g:=|f|$ and for each $x \in (0,1],\,t \geq 0$, let
\begin{align*}
\tilde{H}_t&:=\sum_{k \geq 2}g^{(k)}_{\frac{t}{\epsilon_2+\ldots+\epsilon_k}},\quad \tilde{H}^{(x)}_t:=\sum_{k \geq 2}g^{(k)}_{\frac{x\,t}{x+\epsilon_2+\ldots+\epsilon_k}},\\
 \hat{H}^{(x)}_t&:=\sum_{k \geq 2}g^{(k)}_{\frac{x\,t}{\epsilon_2+\ldots+\epsilon_k}}\,1_{\{\epsilon_2+\ldots+\epsilon_k > x\}}.
\end{align*}
By \cite[Theorem 4.2]{molchanov18}, the process $\tilde{H}$ is non-negative, strong IDT, c\`adl\`ag, and satisfies $\tilde{H}_0=0$. Right-continuity in zero implies for each $x \in (0,1]$ that the first exit time of $\{\tilde{H}_{x\,t}\}_{t \geq 0}$ from the interval $[0,\epsilon/4]$ is almost surely positive, i.e.\
\begin{gather*}
\tilde{T}_x:=\inf\{t>0\,:\,\tilde{H}_{x\,t}>\epsilon/4\}>0.
\end{gather*}
Furthermore, it is obvious from the definition that $\tilde{T}_x=\tilde{T}_1/x$, which implies that $\lim_{x \searrow 0}\tilde{T}_x = \infty$ almost surely. We use the Laplace functional formula \cite[Proposition 3.6]{resnick87} for Poisson random measure to observe for $d \in \IN$ and $t_1,y_1,\ldots,t_d,y_d \geq 0$ arbitrary that
\begin{align*}
\IE\Big[e^{-\sum_{i=1}^{d}y_i\,\tilde{H}^{(x)}_{t_i}}\Big] &= \exp\Big( -\int_{\mathfrak{D}}\int_0^{\infty}1-e^{\sum_{i=1}^{d}y_i\,|f|(x\,t_i/(x+s))}\,\mathrm{d}s\,\gamma(\mathrm{d}f)\Big)\\
& = \exp\Big( -x\,\int_{\mathfrak{D}}\int_0^{1}1-e^{\sum_{i=1}^{d}y_i\,|f|(u\,t_i)}\,\frac{\mathrm{d}u}{u^2}\,\gamma(\mathrm{d}f)\Big),
\end{align*}
having applied the substitution $u=x/(x+s)$. An analogous computation with the substitution $u=x/s$ shows that
\begin{align*}
\IE\Big[e^{-\sum_{i=1}^{d}y_i\,\hat{H}^{(x)}_{t_i}}\Big] &= \exp\Big( -\int_{\mathfrak{D}}\int_0^{\infty}1-e^{\sum_{i=1}^{d}y_i\,|f|(x\,t_i/s)\,1_{\{s>x\}}}\,\mathrm{d}s\,\gamma(\mathrm{d}f)\Big)\\
& = \exp\Big( -x\,\int_{\mathfrak{D}}\int_0^{1}1-e^{\sum_{i=1}^{d}y_i\,|f|(u\,t_i)}\,\frac{\mathrm{d}u}{u^2}\,\gamma(\mathrm{d}f)\Big),
\end{align*}
which implies that $\{\tilde{H}^{(x)}_t\}_{t \geq 0}$ has the same law as $\{\hat{H}_t^{(x)}\}_{t \geq 0}$. This implies that the first exit time of $\tilde{H}^{(x)}$ from $[0,\epsilon/4]$ is equal in law to that of $\hat{H}^{(x)}$. The process $\hat{H}^{(x)}$ evidently satisfies
\begin{gather*}
\hat{H}^{(x)}_t = \sum_{k \geq 2}g^{(k)}_{\frac{x\,t}{\epsilon_2+\ldots+\epsilon_k}}\,1_{\{\epsilon_1+\ldots+\epsilon_k > x\}} \leq \sum_{k \geq 2}g^{(k)}_{\frac{x\,t}{\epsilon_2+\ldots+\epsilon_k}}=\tilde{H}_{t\,x}.
\end{gather*}
For its first exit time from $[0,\epsilon/4]$ this gives the lower bound
\begin{gather*}
\inf\{t>0\,:\,\hat{H}^{(x)}_t>\epsilon/4\} \geq \inf\{t>0\,:\,\tilde{H}_{x\,t}>\epsilon/4\} = \tilde{T}_x,
\end{gather*}
which was shown to converge to infinity almost surely as $x \searrow 0$. Since $\tilde{H}^{(x)}\stackrel{d}{=}\hat{H}^{(x)}$, the first exit time of $\tilde{H}^{(x)}$ from $[0,\epsilon/4]$ is thus also shown to converge to infinity as $x \searrow 0$.
\par
Now the process of interest for us is 
\begin{gather*}
{H}^{(x)}_t:=\sum_{k \geq 2}f^{(k)}_{\frac{x\,t}{x+\epsilon_2+\ldots+\epsilon_k}},\quad t \geq 0,
\end{gather*}
which satisfies $|H^{(x)}_t| \leq \tilde{H}^{(x)}_t$ for all $t$. Consequently, the first exit time $T_x$ of $H^{(x)}$ from the interval $[-\epsilon/4,\epsilon/4]$ is almost surely larger than that of $\tilde{H}^{(x)}$, which was shown above to converge to infinity almost surely as $x \searrow 0$. Consequently, we find an $\mathcal{H}$-measurable (notice that $H^{(x)}$ is $\mathcal{H}$-measurable) random variable $Z>0$ such that $T_x \geq x_2$ for all $x \leq Z$. In particular, on the event $\{\epsilon_1 \leq Z\}$ we have that $T_{\epsilon_1} \geq x_2$ and hence $\sup_{t \in [0,x_2]}|H^{(\epsilon_1)}_t| \leq \epsilon/4$. Finally, on the event $A_{\epsilon}:=\{\epsilon_1 < \epsilon/(4\,x_2\,|b|)\}$ we have $|b\,\epsilon_1\,t| \leq \epsilon/4$ for all $t \leq x_2$. Notice that $A_{\epsilon}$ has positive probability (possibly equal to one if $b=0$). Summing up all terms, we show that the event
\begin{align*}
A_H&:= \Big\{ \{H_{\epsilon_1\,t}\}_{t \geq 0} \mbox{ not non-decreasing on }[x_1,x_2]\\
& \qquad \mbox{ and }\inf_{x \in [x_1,x_2]}\{H_{\epsilon_1\,x}-H_{\epsilon_1\,x_1}\} \leq -\frac{\epsilon}{4}\Big\} 
\end{align*}
has positive probability. To this end, by construction $(A_{f} \cap A_{\epsilon} \cap \{\epsilon_1 \leq Z\}) \subset A_H$, since on this set we have for $x \in [x_1,x_2]$ that
\begin{align*}
\inf_{x \in [x_1,x_2]}\{H_{\epsilon_1\,x}-H_{\epsilon_1\,x_1}\} &=  \hspace{-.4cm}\inf_{x \in [x_1,x_2]}\{f^{(1)}_x-f^{(1)}_{x_1}+\underbrace{b\,\epsilon_1\,(x-x_1)}_{\leq \epsilon/4} + \hspace{-.4cm}\underbrace{H^{(\epsilon_1)}_x-H^{(\epsilon_1)}_{x_1}}_{\leq |H_x^{(\epsilon_1)}|+|H_{x_1}^{(\epsilon_1)}| \leq \epsilon/2}\hspace{-.4cm}\} \\
& \leq \underbrace{\inf_{x \in [x_1,x_2]}\{f^{(1)}_x-f^{(1)}_{x_1}\}}_{\leq -\epsilon}+\frac{3\,\epsilon}{4} \leq -\frac{\epsilon}{4}.
\end{align*}
Hence,
\begin{align*}
\IP(A_H) & \geq \IE[1_{A_{f}}\,1_{A_{\epsilon}}\,1_{\{\epsilon_1 \leq Z\}}]\\
& = \IP(A_f)\,\IE[\IE[1_{A_{\epsilon}}\,1_{\{\epsilon_1 \leq Z\}}\,|\,\mathcal{H}]] = \IP(A_f)\,\IE\Big[ 1-e^{-\min\{Z,\epsilon/(4\,|b|\,x_2)\}}\Big]>0.
\end{align*}
That the last expression is strictly positive follows from the fact that the random variable $\min\{Z,\epsilon/(4\,|b|\,x_2)\}$ is not almost surely zero (it is even almost surely positive). Since $A_H$ has positive probability, $H$ cannot be non-decreasing almost surely, hence the assumption was wrong and $f$ needs to be non-decreasing.
\par
Next, we prove that $b \geq 0$. To this end, we know that $H_1$ is non-negative and infinitely divisible, consequently it has a non-negative drift $b_H \geq 0$ in its L\'evy-Khinchin representation, see \cite{bertoin99,schilling10} for background. Also, the stochastic process 
\begin{gather*}
\tilde{H}_t:=\sum_{k \geq 1}f^{(k)}_{\frac{t}{\epsilon_1+\ldots+\epsilon_k}}, \quad t \geq 0, 
\end{gather*}
is strong IDT by \cite[Theorem 4.2]{molchanov18}, hence $\tilde{H}_1$ is infinitely divisible. But by what we have shown, each $f^{(k)}$ is non-negative almost surely, so $\tilde{H}_1 \geq 0$. Since $f \geq 0$ by what we have just shown, the Bernstein function $\Psi_{\tilde{H}}$ associated with $\tilde{H}_1$ via $\Psi_{\tilde{H}}(x)=-\log(\IE[\exp(-x\,\tilde{H}_1)])$ can be computed using the Laplace functional formula for Poisson random measure, cf.\ \cite[Proposition 3.6]{resnick87}, which yields
\begin{gather*}
\Psi_{\tilde{H}}(x) = \int_{\mathfrak{D}} \int_0^{\infty}1-e^{-x\,f(u)}\,\frac{\mathrm{d}u}{u^2}\,\gamma(\mathrm{d}f).
\end{gather*}
By non-negativity of $f$, we get for all $x \geq 1$ and all $u>0$ the estimate $(1-\exp(-x\,f(u)))/x \leq \min\{1/x,f(u)\} \leq \min\{1,f(u)\}$. By the dominated convergence theorem, using (\ref{intcondition}), we may thus conclude that $\Psi_{\tilde{H}}(x)/x$ converges to zero as $x \rightarrow \infty$. Consequently, the random variable $\tilde{H}_1$ has no drift in its L\'evy-Khinchin representation. This implies $b=b_H \geq 0$.
\end{proof}

We are now in the position to derive the main contribution of the present article. We denote by $M_+^{1}( \mathfrak{E})$ the set of Radon probability measures on some Hausdorff space $\mathfrak{E}$. In particular, we consider the sets $\F$ and $\F_1$ equipped with the topology induced by convergence in distribution of the associated non-negative random variables (aka weak convergence of distribution functions). This topology is well known to be metrizable, hence is Hausdorff, see \cite{sibley71}.
\begin{theorem}[The structure of $\mathfrak{L}$] \label{thm_main}
Let $\ell \in \mathfrak{L}$, not equal to $\ell_{\Pi}$. There exists a pair $(b,\mu) \in [0,1) \times M_+^{1}(\F_1)$ such that $\ell = b\,\ell_{\Pi}+ (1-b)\,\int_{\F_1}\ell_F\,\mu(\mathrm{d}F)$.
\end{theorem}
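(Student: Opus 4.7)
The plan is to exploit the correspondence between $\mathfrak{L}$ and normalized non-decreasing, right-continuous strong IDT processes, and to pass through Molchanov's LePage series representation (\ref{molchanov_repr}) by means of Lemma \ref{lemma_molchanov}. Starting from $\ell \in \mathfrak{L}$ with $\ell \neq \ell_{\Pi}$, I realize $\vec{Y}$ conditionally iid via (\ref{ciiddef}) using the non-decreasing strong IDT process $H_t := -\log(1-F_t)$ of \cite{maischerer13}, where $F_t := \IP(Y_1 \leq t\,|\,\mathcal{T})$. Since $\IE[Y_1] = 1$, $Y_1$ is unit exponential and $\IE[e^{-H_1}] = e^{-1}$. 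By Lemma \ref{lemma_molchanov}, $H$ admits the representation $\{H_t\}_{t \geq 0} \stackrel{d}{=} \{b\,t + \sum_{k \geq 1} f^{(k)}_{t/(\epsilon_1+\cdots+\epsilon_k)}\}_{t \geq 0}$ with $b \geq 0$ and $f$ non-decreasing, c\`adl\`ag, $f_0 = 0$, non-vanishing, satisfying (\ref{intcondition}).

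Next, apply the Laplace functional formula for the Poisson random measure $\sum_{k \geq 1}\delta_{(\epsilon_1+\cdots+\epsilon_k,\,f^{(k)})}$ on $(0,\infty)\times\mathfrak{D}$ with intensity $\mathrm{d}s\,\gamma(\mathrm{d}f)$ to $\ell(\vec{t}) = -\log\IE[e^{-\sum_{k \geq 1}H_{t_k}}]$. This yields the analytic formula
\begin{gather*}
\ell(\vec{t}) = b\,\ell_{\Pi}(\vec{t}) + \int_{\mathfrak{D}}\int_0^\infty \Big(1 - e^{-\sum_{k \geq 1} f(t_k/s)}\Big)\,\mathrm{d}s\,\gamma(\mathrm{d}f).
\end{gather*}
The crucial manipulation is to identify each inner integral with an expected maximum. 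Since $f$ is non-decreasing with $f_0 = 0$, I view $F_f(s) := 1 - e^{-f(s)}$ as the (possibly defective) distribution function of a non-negative random variable $X_f$, and set $Y_f := 1/X_f$ (with the convention $1/0 := \infty$, $1/\infty := 0$). Then $e^{-f(t_k/s)} = \IP(X_f^{(k)} > t_k/s) = \IP(t_k\,Y_f^{(k)} < s)$, so the layer-cake identity gives
\begin{gather*}
\int_0^\infty\Big(1 - \prod_{k \geq 1}\IP(t_k\,Y_f^{(k)} < s)\Big)\,\mathrm{d}s = \IE\big[\max_{k \geq 1} t_k\,Y_f^{(k)}\big] = c_f\,\ell_{G_f}(\vec{t}),
\end{gather*}
where $c_f := \IE[Y_f] = \int_0^\infty (1 - e^{-f(u)})\,\mathrm{d}u/u^2 \in (0,\infty)$ $\gamma$-almost everywhere (by (\ref{intcondition}) and non-vanishing of $f$), and $G_f \in \F_1$ is the distribution function of $Y_f/c_f$.

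Combining the previous displays gives $\ell(\vec{t}) = b\,\ell_{\Pi}(\vec{t}) + \int_{\mathfrak{D}} c_f\,\ell_{G_f}(\vec{t})\,\gamma(\mathrm{d}f)$. Evaluating at $\vec{t} = (t,0,0,\ldots)$ and using the substitution $u = t/s$ reduces the right-hand side to $t\,(b + \int c_f\,\gamma(\mathrm{d}f))$, while the left-hand side equals $t$ (since $Y_1$ is unit exponential), forcing $b + \int c_f\,\gamma(\mathrm{d}f) = 1$. The assumption $\ell \neq \ell_{\Pi}$ rules out $\gamma$ concentrated on $\{f \equiv 0\}$, hence $\int c_f\,\gamma(\mathrm{d}f) > 0$, so $b \in [0,1)$. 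Defining $\mu \in M_+^1(\F_1)$ as the push-forward $\mu(A) := (1-b)^{-1}\int 1_{\{G_f \in A\}}\,c_f\,\gamma(\mathrm{d}f)$ then completes the representation $\ell = b\,\ell_{\Pi} + (1-b)\int_{\F_1}\ell_F\,\mu(\mathrm{d}F)$.

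The main obstacle I expect is the measurability bookkeeping: verifying that $f \mapsto (c_f, G_f)$ is Borel measurable into $(0,\infty) \times \F_1$ (the latter equipped with the weak-convergence topology) so that the push-forward $\mu$ is well-defined, and safely excluding the $\gamma$-null set on which $c_f \in \{0,\infty\}$ using (\ref{intcondition}) and the non-vanishing hypothesis on $f$.
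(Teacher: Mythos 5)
Your proposal is correct and follows essentially the same route as the paper: apply Lemma \ref{lemma_molchanov} to obtain a LePage representation with a non-decreasing, non-vanishing $f$, convert $f$ into a distribution function (your $G_f$, the paper's $\tilde{F}$), renormalize to unit mean via the mean-weighted push-forward, and read off the normalization constant from $\ell(1,0,0,\ldots)=1$. The only difference is presentational -- you apply the Laplace functional first and then identify the inner integral as an expected maximum $c_f\,\ell_{G_f}(\vec{t})$, whereas the paper performs the substitution $u=1/t$ inside the LePage sum and the measure change on $\tilde{\mu}$ before invoking the Laplace functional -- but the underlying computation and measurability bookkeeping are identical.
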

\begin{proof}
Given $\ell \in \L$, there exists an exchangeable sequence $\vec{Y}=(Y_1,Y_2,\ldots)$ of random variables on a probability space $(\Omega,\G,\IP)$ such that $\IP(\vec{Y}>\vec{t})=\exp(-\ell(\vec{t}))$ for $\vec{t} \in [0,\infty)_{00}^{\IN}$. By \cite[Theorem 5.3]{maischerer13}, the stochastic process $H_t:=-\log\big( \IP(Y_1>t\,|\,\mathcal{T})\big)$, $t \geq 0$, with $\mathcal{T}$ the tail-$\sigma$-field of $\vec{Y}$, is strong IDT, non-decreasing and not identically equal to $H_t=t$ (since $\ell \neq \ell_{\Pi}$). Lemma \ref{lemma_molchanov} proves existence of $b \geq 0$ and a non-vanishing, right-continuous, non-decreasing stochastic process $f=\{f_t\}_{t \geq 0}$ with $f_0=0$ (denote the space of all such functions by $\mathfrak{D}_+$ in the sequel) such that (\ref{molchanov_repr}) holds. Denoting the probability law of $f$ by $\gamma$, (\ref{intcondition}) now reads
\begin{gather}
\int_{\mathfrak{D}_+} \int_0^{\infty}\min\{1,f(u)\}\,\frac{\mathrm{d}u}{u^2}\,\gamma(\mathrm{d}f)<\infty.
\label{DCTcond}
\end{gather}
From the properties of $f$ (non-decreasingness, right-continuity and $f(0)=0$) we conclude that the function
\begin{gather*}
\tilde{F}_t:= e^{-\lim_{x \downarrow t} f_{1/x}} ,\quad t \geq 0,
\end{gather*}
is almost surely the distribution function of some non-negative random variable, which is not identically zero (since $f$ is non-vanishing). In the sequel, we denote the probability measure of $\tilde{F}$ by $\tilde{\mu}$. We get from (\ref{DCTcond}) with the estimate $1-x \leq \min\{1,-\log(x)\}$ for $x \in [0,1]$ and the substitution $t=1/u$ that
\begin{align}
\int_{\F} \int_{0}^{\infty}1-F(t)\,\mathrm{d}t \,\tilde{\mu}(\mathrm{d}F) &\leq \int_{\F} \int_{0}^{\infty}\min\{1,-\log(F(t))\}\,\mathrm{d}t \tilde{\mu}(\mathrm{d}F) \nonumber\\
& = \int_{\mathfrak{D}_+} \int_0^{\infty}\frac{\min\{1,f(u)\}}{u^2}\,\mathrm{d}u \,\gamma(\mathrm{d}f) <\infty,
\label{finite_mean}
\end{align}
i.e.\ $\tilde{\mu}$ is an element of $M_+^1({\F})$. For arbitrary $F \in {\F}$ we denote by $M_F:=\int_0^{\infty}1-F(t)\,\mathrm{d}t$ its mean. By (\ref{finite_mean}), the positive random variable $M_{\tilde{F}}$ has finite mean $c > 0$ (note that $M_{\tilde{F}}$ is positive almost surely and $c=0$ is ruled out since $f$ is non-vanishing, hence $\tilde{F}$ not almost surely identically equal to one), i.e.\ $\int_{{\F}} M_{F} \,\tilde{\mu}(\mathrm{d}F)=c $. Consequently, 
\begin{gather*}
\hat{\mu}(\mathrm{d}F) := \frac{M_F}{c}\,\tilde{\mu}(\mathrm{d}F)
\end{gather*}
defines an equivalent probability measure on $\F$. Finally, we denote by $\mu$ the probability measure that describes the law of the process $\{\tilde{F}_{M_{\tilde{F}}\,t}\}_{t \geq 0}$ under the measure $\hat{\mu}$, and observe that
\begin{gather*}
M_{\tilde{F}_{M_{\tilde{F}}\,.}} = \int_0^{\infty}1-\tilde{F}_{M_{\tilde{F}}\,s}\,\mathrm{d}s = \frac{1}{M_{\tilde{F}}}\,\int_0^{\infty}1-\tilde{F}_{s}\,\mathrm{d}s =1
\end{gather*}
almost surely. Consequently, $\mu \in M_+^1(\F_1)$. Putting together the pieces, we may re-write (\ref{molchanov_repr}) as
\begin{gather*} 
\{H_t\}_{t \geq 0} \stackrel{d}{=} \Big\{b\,t+\sum_{k \geq 1}-\log\Big[\tilde{F}^{(k)}_{\frac{\epsilon_1+\ldots+\epsilon_k}{t}-}\Big]\Big\}_{t \geq 0},\quad t \geq 0,
\end{gather*}
and we observe\footnote{Here, $\delta_{e}$ denotes the Dirac measure at a point $e$ in some Hausdorff space $\mathfrak{E}$.} that $\sum_{k \geq 1}\delta_{(\epsilon_1+\ldots+\epsilon_k,\tilde{F}^{(k)})}$ is a Poisson random measure on $[0,\infty) \times \F$ with mean measure $\mathrm{d}x \times \tilde{\mu}(\mathrm{d}F)$. Hence, the Laplace functional formula \cite[Proposition 3.6]{resnick87}, applied in the third equality below, gives
\begin{align*}
\IP(\vec{Y}>\vec{t}) & = \IE\Big[ e^{-\sum_{i \geq 1}H_{t_i}}\Big] = e^{-b\,\ell_{\Pi}(\vec{t})}\,\IE\Big[ \exp\Big\{{-\sum_{k \geq 1} -\log\big[\prod_{i \geq 1} \tilde{F}^{(k)}_{\frac{\epsilon_1+\ldots+\epsilon_k}{t_i}-}\big]}\Big\}\Big]\\
& = e^{-b\,\ell_{\Pi}(\vec{t})}\,\exp\Big\{ -\int_{{\F}}\int_0^{\infty}1-\prod_{i \geq 1}F\Big( \frac{x}{t_i}\Big)\,\mathrm{d}x\,\tilde{\mu}(\mathrm{d}F)\Big\}\\
& = e^{-b\,\ell_{\Pi}(\vec{t})}\,\exp\Big\{ -\int_{{\F}}\int_0^{\infty}1-\prod_{i \geq 1}F\Big( \frac{M_F\,x}{t_i}\Big)\,\mathrm{d}x\,M_F\,\tilde{\mu}(\mathrm{d}F)\Big\}\\
& = e^{-b\,\ell_{\Pi}(\vec{t})}\,\exp\Big\{ -c\,\int_{{\F}_1}\int_0^{\infty}1-\prod_{i \geq 1}F\Big( \frac{M_F\,x}{t_i}\Big)\,\mathrm{d}x\,{\hat{\mu}}(\mathrm{d}F)\Big\}\\
& = e^{-b\,\ell_{\Pi}(\vec{t})}\,\exp\Big\{ -c\,\int_{{\F}_1}\int_0^{\infty}1-\prod_{i \geq 1}F\Big( \frac{x}{t_i}\Big)\,\mathrm{d}x\,{{\mu}}(\mathrm{d}F)\Big\}\\
& = \exp\Big\{ -b\,\ell_{\Pi(\vec{t})}-c\,\int_{{\F}_1}\ell_F(\vec{t})\,\mu(\mathrm{d}F)\Big\}.
\end{align*}
The normalizing assumption $\IE[Y_1]=1$ in the definition of $\mathfrak{L}$ means that the exponential rate of the exponential random variable $Y_1$ equals one, which implies that $\ell(1,0,0,\ldots)=1$. We thus observe from the last equation that 
\begin{gather*}
1 = \ell(1,0,0,\ldots) = b+c.
\end{gather*} 
From this we conclude that $c=1-b$, hence
\begin{gather*}
\ell = b\,\ell_{\Pi}+(1-b)\,\int_{{\F}_1}\ell_{F}\,\mu(\mathrm{d}F),
\end{gather*}
as claimed. 
\end{proof}

The following corollary is of particular relevance when thinking about potential further research concerning the parameter estimation of non-decreasing strong IDT processes or exchangeable max-stable sequences, resp.\ extreme-value copulas.

\begin{corollary}[Uniqueness]\label{cor_uni}
The pair $(b,\mu)$ in Theorem \ref{thm_main} is unique.
\end{corollary}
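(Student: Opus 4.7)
The plan is to recover $b$ and $\mu$ as canonical invariants of $\ell$.

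For the uniqueness of $b$, I would evaluate $\ell$ at $\vec{t}_n := (\underbrace{1,\ldots,1}_n,0,0,\ldots)$. One has $\ell_{\Pi}(\vec{t}_n) = n$ and $\ell_F(\vec{t}_n) = \IE_F[\max(X_1,\ldots,X_n)]$ for $X_i \iidsim F \in \F_1$. The maximum is bounded by the sum, so $\IE_F[\max]/n \leq 1$; moreover $\IE_F[\max]/n \to 0$ as $n \to \infty$ (since $X_k/k \to 0$ almost surely whenever $\IE_F[X_1]<\infty$, and $\max/n$ is uniformly integrable by domination through the partial averages $(X_1+\cdots+X_n)/n$). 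Dominated convergence against $\mu$ in the representation of Theorem~\ref{thm_main} then yields $\lim_{n \to \infty} \ell(\vec{t}_n)/n = b$, pinning down $b$.

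For the uniqueness of $\mu$ (given $b$), I would appeal to the canonical L\'evy structure of the non-decreasing strong IDT process $H_t := -\log \IP(Y_1 > t\,|\,\mathcal{T})$ associated with $\ell$. The law of $H$ is determined by $\ell$: for $x_1,\ldots,x_d \in \IN$ one has $-\log \IE[\exp(-\sum_i x_i H_{t_i})] = \ell(\vec{t}')$ where $\vec{t}'$ lists each $t_i$ with multiplicity $x_i$, and the mixed moments of the $[0,1]$-valued variables $e^{-H_{t_i}}$ determine the joint law of $(H_{t_1},\ldots,H_{t_d})$. By \cite[Theorem~4.2]{molchanov18}, the intensity measure $\nu$ on the path space $\mathfrak{D}_+$, obtained as the push-forward of $\mathrm{d}s \otimes \gamma(\mathrm{d}f)$ under $(s,f)\mapsto f(\cdot/s)$, is intrinsic to $H$ --- the scaling non-uniqueness in the LePage representation (\ref{molchanov_repr}) is exactly absorbed by this push-forward. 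Pushing $\nu$ once more under the bijection $g \mapsto (\bar{G},M_G) \in \F_1 \times (0,\infty)$ defined by $G_t := e^{-\lim_{x \downarrow t} g_{1/x}}$ and $\bar{G}(t):=G(M_G t)$, the same change of variable $y=M_{\tilde{F}}/s$ used in the proof of Theorem~\ref{thm_main} shows that the push-forward factors as $(1-b)\,\mu(\mathrm{d}\bar{G}) \otimes y^{-2}\,\mathrm{d}y$ on $\F_1 \times (0,\infty)$. Since $y^{-2}\,\mathrm{d}y$ is a fixed reference measure and $\mu$ is a probability, the factor $\mu$ is uniquely determined.

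The main technical obstacle will be the justification that the intensity measure $\nu$ on $\mathfrak{D}_+$ is intrinsic to $H$. The LePage representation itself is not unique (one can trade time-scaling of $f$ against the rate of the Poisson arrivals $\{\epsilon_1+\cdots+\epsilon_k\}_k$), and one must verify that this non-uniqueness is cancelled exactly by the pushforward $(s,f) \mapsto f(\cdot/s)$. A fallback would be to work at the level of finite-dimensional L\'evy measures: for each $(t_1,\ldots,t_d)$, the L\'evy measure of $(H_{t_1},\ldots,H_{t_d})$ on $[0,\infty)^d$ is unique by classical L\'evy--Khintchine theory, and the projective family of these finite-dimensional measures then pins down $\nu$ on $\mathfrak{D}_+$. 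A more self-contained alternative, avoiding the IDT machinery altogether, would be to prove directly the injectivity of the map $\mu \mapsto \int_{\F_1}\ell_F\,\mu(\mathrm{d}F)$ on $M_+^1(\F_1)$, for instance by combining the injectivity of $F \mapsto \ell_F$ established in \cite{mai17} with a Choquet-type argument identifying $\{\ell_F:F\in\F_1\}$ as an extremal set within $\mathfrak{L}$; I expect this route to be no simpler than the L\'evy-measure argument above.
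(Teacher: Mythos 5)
Your approach mirrors the paper's in both halves. For $b$, the limit $\ell(\vec{t}_n)/n \to b$ is exactly the drift identification the paper performs at the level of the Bernstein function of $H_1$: the paper uses $\ell_F(\bm{1}_n,0,0,\ldots) = \Psi_F(n)$ for a driftless Bernstein function $\Psi_F$ (citing \cite[Lemma 3]{mai17}), and driftlessness of $\Psi_F$ is the same analytic fact as your $\IE_F[\max_{k\le n}X_k]/n \to 0$; your dominated-convergence pass over $\mu$ (dominated by $1$) then reads off $b$ as the drift, just as the paper does. For $\mu$, the "main technical obstacle" you flag --- that the path-space intensity $\nu$, the pushforward of $\mathrm{d}s\otimes\gamma(\mathrm{d}f)$ under $(s,f)\mapsto f(\cdot/s)$, is intrinsic to $H$ --- is precisely what the paper dispatches by citing \cite[Remark 4.1]{molchanov18} (not Theorem 4.2, which is the existence statement): after Lemma \ref{lemma_molchanov} forces $\gamma$ onto non-decreasing paths, each such path $g$ has a unique rescaling $g = c\circ g^{(1)}$ with $g^{(1)}\in\mathfrak{G}_1$, and Molchanov's remark gives uniqueness of the normalized law of $f^{(1)}$. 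Your factorization of $\nu$ as $(1-b)\,\mu\otimes y^{-2}\,\mathrm{d}y$ under the bijection $\mathfrak{G}\to\F_1\times(0,\infty)$ is a correct and slightly more explicit way to phrase the same resolution (the change of variable $y=M_{\tilde F}/s$ checks out), and your proposed fallback via finite-dimensional L\'evy measures is also sound. So there is no substantive gap; the one step you mark as uncertain is exactly the one the paper closes with the Remark 4.1 citation.
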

\begin{proof}
It is convenient to study uniqueness in terms of the probability law of the uniquely associated strong IDT process $H$, determined by
\begin{gather*}
\IE\Big[ e^{-\sum_{k \geq 1}H_{t_k}}\Big] = e^{-b\,\ell_{\Pi}(\vec{t})-(1-b)\,\int_{\F_1}\ell_F(\vec{t})\,\mu(\mathrm{d}F)},\quad \vec{t} \in [0,\infty)^{\IN}_{00}.
\end{gather*}
The constant $b$ is unique, because it is the unique drift constant in the L\'evy-Khinchin representation of the infinitely divisible random variable $H_1$. To explain this, denote by $\bm{1}_n=(1,\ldots,1)$ an $n$-dimensional row vector with all entries equal to one. For each fixed $F \in \F_1$ the stable tail dependence function $\ell_F$ satisfies $\ell_F(\bm{1}_n,0,0,\ldots)=\Psi_F(n)$ for a Bernstein function $\Psi_F$ without drift, see \cite[Lemma 3]{mai17}. This property carries over to probability mixtures, so that $\IE[\exp(-n\,H_1)]=\exp(-b\,n-(1-b)\,\Psi(n))$ for some Bernstein function $\Psi$ without drift. Hence, the Bernstein function associated with $H_1$ has (unique) drift $b$ and its L\'evy measure equals $1-b$ times the L\'evy measure of $\Psi$.
\par
Regarding $\mu$, it follows from Lemma \ref{lemma_molchanov} and the proof of Theorem \ref{thm_main} that the probability law of $f$ in any Le Page series representation for $H$ is necessarily supported by the set 
\begin{align*}
\mathfrak{G}&:=\Big\{ g:[0,\infty) \rightarrow [0,\infty]\,:\,g(0)=0,\,g\mbox{ right-continuous, non-decreasing,}\\
& \qquad \qquad \lim_{t \rightarrow \infty}g(t)=\infty,\,\int_0^{\infty}1-e^{-g_{\frac{1}{s}}}\,\mathrm{d}s<\infty\Big\}.
\end{align*}
For each $g \in \mathfrak{G}$ there is a unique $c>0$, namely
\begin{gather*}
c:=\int_0^{\infty}1-e^{-g_{\frac{1}{s}}}\,\mathrm{d}s,
\end{gather*}
such that\footnote{We use the notation of \cite{molchanov18}, denoting $(c \circ g^{(1)})(t):=g^{(1)}(c\,t)$, for $t \geq 0$ and $c>0$.} $g = c \circ g^{(1)}$, where $g^{(1)}$ lies in the smaller set 
\begin{gather*}
\mathfrak{G}_1 := \Big\{g \in \mathfrak{G}\,:\,\int_0^{\infty}1-e^{-g_{\frac{1}{s}}}\,\mathrm{d}s=1\Big\}.
\end{gather*}
By \cite[Remark 4.1]{molchanov18} the law of $f^{(1)}$ is unique. This implies that the measure $\mu$ is unique, since by the proof of Theorem \ref{thm_main} it equals the law of 
\begin{gather*}
F_t:=e^{-\lim_{x \downarrow t} f^{(1)}_{1/x}},\quad t \geq 0,
\end{gather*}
and this transformation maps $\mathfrak{G}_1$ to $\F_1$ in a bijective manner. 
\end{proof}

\begin{remark}[Consequences and explanations of the results]\label{rmk}
We collect a few explanatory remarks ((b),(c),(f)) and immediate consequences ((a),(d),(e)) of the main results:
\begin{itemize}
\item[(a)] \textbf{$\mathfrak{L}$ is a simplex:}\\
We first provide a proof that $\mathfrak{L}$ is compact (in the topology of pointwise convergence).
\begin{proof}
Let $\{\ell_n\}_{n \in \IN} \subset \mathfrak{L}$. Then we find strong IDT processes $\{H^{(n)}\}_{n \in \IN}$ associated with these $\ell_n$. The processes $F^{(n)}:=1-\exp(-H^{(n)})$ define random variables on the space of distribution functions of non-negative random variables. The set of distribution functions of random variables taking values in $[0,\infty]$ (equipped with the topology of pointwise convergence at all continuity points of the limit) is compact by Helly's Selection Theorem and Hausdorff (since it is metrizable by the L\'evy metric, see \cite{sibley71}). Thus, the Radon probability measures on this set (equipped with the weak topology) form a Bauer simplex by \cite[Corollary II.4.2, p.\ 104]{alfsen71}, in particular form a compact set. Since the probability measures of the given sequence $\{F^{(n)}\}$ lie in this set, we find a convergent subsequence $\{F^{(n_i)}\}$ and a limiting law, hence a limiting stochastic process $F$, and we define $H:=-\log(1-F)$. Then $F \in M_+^{1}(\mathfrak{F})$ almost surely, which follows from 
\begin{align*}
\IE\Big[ \int_0^{\infty}1-F_s\,\mathrm{d}s\Big] &= \IE\Big[ \int_0^{\infty}e^{-H_s}\,\mathrm{d}s\Big]= \int_0^{\infty}\IE\Big[e^{-H_s}\Big]\,\mathrm{d}s\\
& = \int_0^{\infty}\lim_{i \rightarrow \infty}\IE\Big[e^{-H^{(n_i)}_s}\Big]\,\mathrm{d}s = \int_0^{\infty}e^{-s}\,\mathrm{d}s = 1<\infty,
\end{align*}
where the third equality follows from the bounded convergence theorem and the fourth from the fact that $\IE[\exp(-H^{(n)}_s)]=\exp(-s\,\ell_n(1,0,0,\ldots))=\exp(-s)$ for each $n \in \IN$. We define
\begin{gather*}
\ell(\vec{t}):=-\log\Big\{ \IE\Big[ e^{-\sum_{k \geq 1}H_{t_k}}\Big]\Big\},\quad \vec{t} \in [0,\infty)^{\IN}_{00},
\end{gather*}
and claim that $\ell$ equals the limit of $\ell_{n_i}$. To see this, for fixed $\vec{t}$ we compute
\begin{align*}
\ell(\vec{t}) & = -\log\Big\{ \IE\Big[ e^{-\sum_{k \geq 1}H_{t_k}}\Big]\Big\} = -\log\Big\{ \IE\Big[ e^{-\sum_{k \geq 1}\lim_{i \rightarrow \infty}H^{(n_i)}_{t_k}}\Big]\Big\}\\
& \stackrel{(\ast)}{=} -\log\Big\{ \IE\Big[ e^{-\lim_{i \rightarrow \infty}\sum_{k \geq 1}H^{(n_i)}_{t_k}}\Big]\Big\} \stackrel{(\ast\ast)}{=}  -\log\Big\{\lim_{i \rightarrow \infty} \IE\Big[ e^{-\sum_{k \geq 1}H^{(n_i)}_{t_k}}\Big]\Big\} \\
& =  -\log\Big\{\lim_{i \rightarrow \infty}e^{-\ell_{n_i}(\vec{t})}\Big\} = \lim_{i \rightarrow \infty}\ell_{n_i}(\vec{t}).
\end{align*}
We have used the fact that almost all entries of $\vec{t}$ are zero in $(\ast)$ and bounded convergence in $(\ast\ast)$. Finally, to see that $H$ is strong IDT, it suffices to verify that the homogeneity of order one of the $\ell_{n_i}$ carries over to the limit $\ell$ (obviously), hence $\ell \in \mathfrak{L}$.
\end{proof}
$\mathfrak{L}$ is obviously convex and by Theorem \ref{thm_main} the extremal boundary of $\mathfrak{L}$ is $\partial_e \mathfrak{L}=\{\ell_{\Pi}\}\cup \{\ell_F\,:\,F \in \F_1\}$. Thus, $\mathfrak{L}$ is a simplex, since the boundary integral representation is unique by Corollary \ref{cor_uni}. Whether $\mathfrak{L}$ is a Bauer simplex, i.e.\ whether $\partial_e \mathfrak{L}$ is closed, is not obvious, since $\F_1$ is not compact. It is left as an open question at this point. 
\item[(b)] \textbf{The isolated nature of $\ell_{\Pi}$:}\\
One noticeable aspect about the topology on $\mathfrak{L}$ is that the seemingly isolated point $\ell_{\Pi}$ is in fact not isolated. A sequence $\{\ell_{F_n}\}_{n \in \IN} \subset \partial_e \mathfrak{L}$ converges (pointwise) to $\ell_{\Pi}$ if and only if $\{F_n\}_{n \in \IN} \subset \F_1$ converges to $F_0$ at all continuity points of $F_0$ (which means at all $x>0$, but not necessarily at $x=0$). To see this, we point out for $F \in \F_1$ that
\begin{gather*}
\ell^{(2)}_{F_n}(1,1) = 2-\int_0^{\infty}\big(1-F_n(s)\big)^2\,\mathrm{d}s=2-||F_0-F_n||_{L^2}^{2}.
\end{gather*}
Thus, $\ell^{(2)}_{F_n}(1,1)$, which is always $\leq 2$, converges to $2$ as $n \rightarrow \infty$ if and only if $F_n(x)$ converges to $F_0(x)=1$ for all $x>0$. But the only element $\ell \in \mathfrak{L}$ satisfying $\ell^{(2)}(1,1) = 2$ is $\ell_{\Pi}$, since $\ell^{(2)}_F(1,1)<2$ for each $F \in \F_1$.
\item[(c)] \textbf{Probabilistic interpretation of the results:}\\ 
As already remarked in the introduction, rewriting Theorem \ref{thm_main} and Corollary \ref{cor_uni} in probabilistic terms, instead of in the analytical terms of $\mathfrak{L}$, means that $\ell \in \mathfrak{L} \setminus\{\ell_{\Pi}\}$ has the spectral representation
\begin{gather}
\ell(\vec{t}) = b\,\ell_{\Pi}(\vec{t})+(1-b)\,\IE\big[ \max_{k \geq 1}\{t_k\,X_k\}\big],\quad \vec{t} \in [0,\infty)^{\IN}_{00},
\label{specshort}
\end{gather}
where $b \in [0,1)$ and $\vec{X}$ is an exchangeable sequence of non-negative random variables with $\IE[X_1]=1$ and the property that $\lim_{n \rightarrow \infty} \frac{1}{n}\,\sum_{k=1}^{n}X_k$ is almost surely identically equal to one. This representation is canonical in the sense that the constant $b$ as well as the probability law of $\vec{X}$ are unique.
\par
Recalling the classical extreme-value theory based on Poisson random measure, see \cite{resnick87} for a textbook treatment, a stochastic representation for $\vec{Y}$ based on the spectral representation (\ref{specshort}) is given by
\begin{gather*}
\vec{Y} \stackrel{d}{=} \Big( \min\Big\{\frac{ Y_1^{(1)}}{b},\frac{ Y_1^{(0)}}{1-b}\Big\},\,\min\Big\{\frac{ Y_2^{(1)}}{b},\frac{ Y_2^{(0)}}{1-b}\Big\},\ldots\Big),
\end{gather*}
where $\vec{Y}^{(1)}=(Y^{(1)}_1,Y^{(1)}_2,\ldots)$ is a sequence of iid unit exponentials (corresponding to the case $b=1$), and, independently, $\vec{Y}^{(0)}=(Y^{(0)}_1,Y^{(0)}_2,\ldots)$ corresponds to the case $b=0$ and satisfies 
\begin{gather*}
\vec{Y}^{(0)} \stackrel{d}{=} \Big(\min_{n \geq 1}\Big\{ \frac{\epsilon_1+\ldots+\epsilon_n}{X_1^{(n)}}\Big\},\,\min_{n \geq 1}\Big\{ \frac{\epsilon_1+\ldots+\epsilon_n}{X_2^{(n)}}\Big\},\ldots\Big),
\end{gather*}
where $\vec{X}^{(n)}$ are independent copies of $\vec{X}$ and, independently, $\epsilon_1,\epsilon_2,\ldots$ is an iid sequence of unit exponentials. This classical representation does not make explicit use of exchangeability, but is only an instance of the general (non-exchangeable) theory. An alternative stochastic representation for $\vec{Y}$, due to \cite[Theorem 5.3]{maischerer13} and making use of exchangeability, is given by (\ref{ciiddef}) with the stochastic process $H$ defined via its Le Page representation
\begin{gather*}
H_t = b\,t+(1-b)\,\sum_{k \geq 1}-\log\Big\{ F^{(k)}_{\frac{\epsilon_1+\ldots+\epsilon_k}{t}-} \Big\},
\end{gather*}
where $F^{(k)}$ are independent copies of the random distribution function $F$ from which $\vec{X}$ is drawn. By Glivenko-Cantelli, $F$ may be written in terms of $\vec{X}$ as $F_t = \lim_{n \rightarrow \infty}\frac{1}{n}\,\sum_{i=1}^{n}1_{\{X_i \leq t\}}$. In other words, the sequence $\vec{Y}$ is an iid sequence drawn from the random distribution function $t \mapsto 1-\exp(-H_t)$. Notice in particular that $M_F=\int_0^{\infty}1-F_t\,\mathrm{d}t=\lim_{n \rightarrow \infty}\frac{1}{n}\,\sum_{i=1}^{n}X_i$ is identically one by the normalization to $\F_1$ (instead of $\F$). Depending on the law of $\vec{X}$ (and thus the properties of $H$), this representation is particularly convenient to simulate the random vector $(Y_1,\ldots,Y_d)$ for arbitrarily large $d$. An alternative strategy to accomplish this simulation, whose idea rather makes use of the general (non-exchangeable) theory, is presented in the following bullet point (d).
\item[(d)] \textbf{Pickands representation of finite-dimensional margins:}\\
The Pickands representation of $\ell^{(d)}_F$ has been derived in \cite[Lemma 4]{mai17}. Furthermore, the Pickands representation of $\ell^{(d)}_{\Pi}$ is well known to correspond to a uniform distribution on $\{1,\ldots,d\}$. Combining these facts with Theorem \ref{thm_main} immediately implies for the random vector $\vec{X}^{(d)}$ in (\ref{Pickandsrepr}) associated with $\ell^{(d)}$ for $\ell \in \mathfrak{L}$, represented by $(b,\mu)$, that
\begin{gather}
\vec{X}^{(d)} \stackrel{d}{=} \Big(\frac{W^{(d)}_1}{\sum_{i=1}^{d}W^{(d)}_i},\ldots, \frac{W^{(d)}_d}{\sum_{i=1}^{d}W^{(d)}_i} \Big),
\label{repr_Pick_ell}
\end{gather}
where the random vector $\vec{W}^{(d)}$ can be simulated as follows:
\begin{itemize}
\item Draw a random variable $D$ which is uniformly distributed on $\{1,\ldots,d\}$.
\item Draw a Bernoulli random variable with success probability $b$. If success, define $W^{(d)}_k:=1_{\{k=D\}}$ for $k=1,\ldots,d$ and return. Otherwise, proceed with the following steps.
\item Simulate the random distribution function $F=\{F_t\}_{t \geq 0}$ from the probability measure $\mu \in M_+^1(\F_1)$ and draw a random variable $Z$ with distribution function $t \mapsto \int_0^{t}s\,\mathrm{d}F_s$, $t \geq 0$.
\item Draw iid random variables $Z_1,\ldots,Z_d$ with distribution function $F$.
\item Define $W^{(d)}_k:=1_{\{k=D\}}\,Z+1_{\{k \neq D\}}\,Z_k$ for $k=1,\ldots,d$ and return.
\end{itemize}
This algorithm to simulate the random vector $\vec{X}^{(d)}$ can be used to derive an exact simulation algorithm for the random vector $(Y_1,\ldots,Y_d)$, see \cite[Algorithm 1]{dombry16}.
\par 
We emphasize again at this point that the probability law of $\vec{X}^{(d)}$ is uniquely determined by the function $\ell^{(d)}$. However, there exist exchangeable random vectors $\vec{X}^{(d)}$ taking values in $S_d$, and thus via (\ref{Pickandsrepr}) lead to symmetric $d$-dimensional stable tail dependence functions, but which cannot be represented as in (\ref{repr_Pick_ell}). These correspond to $d$-dimensional exchangeable max-stable random vectors that do not satisfy the stronger notion of ``infinite (De Finetti) exchangeability'' (or ``conditionally iid''), and are thus outside the realm of the present article since they do not arise as $d$-margins of some $\ell \in \mathfrak{L}$. An example for $d=3$ can be retrieved from \cite[Example 2.4]{maibrazil13}. We recall from this example that the $3$-variate stable tail dependence function
\begin{gather*}
\ell^{(3)}(t_1,t_2,t_3) = \frac{\lambda_1}{\lambda_1+2\,\lambda_2+\lambda_3}\,t_{[1]}+\frac{\lambda_1+\lambda_2}{\lambda_1+2\,\lambda_2+\lambda_3}\,t_{[2]}+t_{[3]},
\end{gather*}
with $t_{[1]} \leq t_{[2]} \leq t_{[3]}$ the order list of $t_1,t_2,t_3$ and with positive parameters $\lambda_1,\lambda_2,\lambda_3>0$, arises as $3$-margin of some $\ell \in \mathfrak{L}$ if and only if $\lambda_2^2 \leq \lambda_1\,\lambda_3$. If this condition is violated, $\ell^{(3)}$ is still a symmetric stable tail dependence function, but the associated random vector $(Y_1,Y_2,Y_3)$ cannot have a stochastic representation that is ``conditionally iid''.
\item[(e)] \textbf{Arbitrary, non-decreasing strong IDT processes:}\\
The probability law of a non-decreasing, right-continuous, strong IDT process $H=\{H_t\}_{t \geq 0}$, which is not deterministic (i.e.\ not identically $H_t=b\,t$ for some $b \geq 0$), is uniquely described by a triplet $(b,c,\mu) \in [0,\infty)\times (0,\infty) \times M_+^{1}(\F_1)$, and has the LePage series representation
\begin{gather*}
H_t=b\,t+c\,\sum_{k \geq 1}-\log\Big\{ F^{(k)}_{\frac{\epsilon_1+\ldots+\epsilon_k}{t}-}\Big\},
\end{gather*}
where $\sum_{k \geq 1}\delta_{(\epsilon_1+\ldots+\epsilon_k,F^{(k)})}$ is a Poisson random measure on $[0,\infty) \times \F_1$ with mean measure $\mathrm{d}t \times \mu$. Lemma \ref{lemma_molchanov} and the proof of Theorem \ref{thm_main} show that other LePage series representations of the form (\ref{molchanov_repr}) can only differ from this canonical one by changing from the unique measure $\mu \in M_+^{1}(\F_1)$ to some $\tilde{\mu} \in M_+^{1}(\F)$ (and potentially adjusting the constant $c$ accordingly). The unboundedness of $b$ as well as the additional constant $c>0$ in the triplet $(b,c,\mu)$, when compared to $(b,\mu)$ in Theorem \ref{thm_main}, is due to the fact that a stable tail dependence function $\ell$ is normalized to satisfy $\ell(1,0,0,\ldots)=b+c=1$ (corresponding to $\IE[Y_1]=1$), while the triplet $(b,c,\mu)$ describes the law of an arbitrary non-decreasing, right-continuous strong IDT process $H$ via the relation
\begin{gather*}
\IE\Big[ e^{-\sum_{k \geq 1}H_{t_k}}\Big] = e^{-b\,\ell_{\Pi}(\vec{t})-c\,\int_{\F_1}\ell_F(\vec{t})\,\mu(\mathrm{d}F)}.
\end{gather*}
\item[(f)] \textbf{The measure change in Theorem \ref{thm_main}:}\\
If $\tilde{F}$ is a random variable on $(\Omega,\G,\IP)$ taking values in $\F$ with the additional property that $\IE[M_{\tilde{F}}]=1$, with $M_{\tilde{F}}=\int_0^{\infty}1-\tilde{F}_s\,\mathrm{d}s$, then 
\begin{gather*}
\ell(\vec{t}) := \IE\Big[ \int_0^{\infty}1-\prod_{k \geq 1}\tilde{F}_{\frac{s}{t_k}}\,\mathrm{d}s\Big]
\end{gather*}
defines an element in $\mathfrak{L}$. What is its canonical boundary integral representation? To this end, we define $F_t:=\tilde{F}_{M_{\tilde{F}}\,t}$, $t \geq 0$, and observe that $F$ takes values in $\F_1$. We further define an equivalent probability measure $\IQ$ via the measure change $\mathrm{d}\IQ = M_{\tilde{F}}\,\mathrm{d}\IP$. Denoting expectation with respect to $\IQ$ by $\IE^{\IQ}$, we observe that
\begin{align*}
\ell(\vec{t}) &= \IE\Big[ \int_0^{\infty}1-\prod_{k \geq 1}\tilde{F}_{\frac{s}{t_k}}\,\mathrm{d}s\Big] = \IE\Big[M_{\tilde{F}}\, \int_0^{\infty}1-\prod_{k \geq 1}\tilde{F}_{M_{\tilde{F}}\,\frac{s}{t_k}}\,\mathrm{d}s\Big]\\
&=\IE\big[M_{\tilde{F}}\,\ell_F(\vec{t})\big]=\IE^{\IQ}\big[ \ell_{F}(\vec{t})\big].
\end{align*}
Example \ref{ex_logistic} below provides an example for such $\ell$.
\end{itemize}
\end{remark}

We end this section with a few examples.

\begin{example}[The L\'evy subordinator case]\label{exlevy}
Recall that a \emph{L\'evy subordinator} $L=\{L_t\}_{t \geq 0}$, see \cite{bertoin99} for a textbook treatment, is a non-decreasing, right-continuous strong IDT process with independent and stationary increments, which implies that its law is fully determined by the law of $L_1$. By the well known L\'evy-Khinchin formula for infinitely divisible distributions, the probability law of $L_1$ is canonically described in terms of a pair $(b_L,\nu_L)$ of a drift constant $b_L \geq 0$ and a Radon measure $\nu_L$ on $(0,\infty]$ subject to the condition $\int_0^{1}x\,\nu_L(\mathrm{d}x)<\infty$, the so-called L\'evy measure. The (non-deterministic) L\'evy subordinator $L$ associated with the pair $(b_L,\nu_L)$ is obtained when specifying $b:=b_L$, $c:=\int_{(0,\infty]}1-\exp(-x)\,\nu_L(\mathrm{d}x)$, and $\mu$ as the law of the random distribution function
\begin{gather*}
F_t:=e^{-\Theta}+\big(1-e^{-\Theta} \big)\,1_{\{1-e^{-\Theta} \geq 1/t\}},\quad t \geq 0, \quad \Theta \sim \big(1-e^{-x}\big)\,\nu_L(\mathrm{d}x)/c.
\end{gather*}
Conditioned on the randomized parameter $\Theta$, this $F$ corresponds to a random variable taking the value $1/(1-\exp(-\Theta))$ with probability $1-\exp({-\Theta})$, and the value zero with complementary probability $\exp(-\Theta)$. The random parameter $\Theta$ itself is drawn from the probability measure $\big(1-e^{-x}\big)\,\nu_L(\mathrm{d}x)/c$. Notice that every probability measure on $(0,\infty]$ is possible for $\Theta$, but the law of $\Theta$ is invariant with respect to changes of $c$, that is when changing from $\nu_L$ to $\beta\,\nu_L$ for some $\beta>0$. 
\par
If $L$ is normalized to satisfy $b+c=1$, this example corresponds to a stable tail dependence function $\ell \in \mathfrak{L}$, given by 
\begin{align*}
\ell(\vec{t}) &= \sum_{k=1}^{d(\vec{t})}t_{[k]}\,\big(\Psi(d(\vec{t})-k+1)-\Psi(d(\vec{t})-k)\big),\\
\Psi(x)&=b\,x+\int_{(0,\infty]}1-e^{-x\,t}\,\nu_L(\mathrm{d}t),
\end{align*}
where $d(\vec{t}):=\max\{n \in \IN\,:\,t_n>0\}$ and $t_{[1]} \leq t_{[2]} \leq \ldots \leq t_{[d(\vec{t})]}$ denotes an ordered list of $t_1,\ldots,t_{d(\vec{t})}$. The associated extreme-value copulas $C_{\ell}$ form precisely the ``conditionally iid'' subfamily of the survival copulas of the Marshall-Olkin exponential distribution, see \cite[Chapter 3.3]{maischerer17} for a textbook treatment of this connection. Furthermore, $\ell \in \partial_e \mathfrak{L}$ if and only if either $L_t=t$ (corresponding to $b=1$ and $\ell=\ell_{\Pi}$) or if $L$ is a compound Poisson subordinator with constant jump sizes. In the latter case, $b=0$ and $\ell=\ell_F$ with $F$ as described above, but $\Theta$ a non-random, positive constant. 
\end{example}

\begin{example}[(Generalized) logistic model]\label{ex_logistic}
Let $\ell \in \mathfrak{L}$ arbitrary and $\alpha \in (0,1)$. Furthermore, we denote by $H=\{H_t\}_{t \geq 0}$ a non-decreasing strong IDT process associated with $\ell$, and by $H^{(k)}$ independent copies thereof, $k \in \IN$. Let $M$ be a positive random variable with Laplace transform $\IE[\exp(-x\,M)]=\exp(-x^{\alpha})$, i.e.\ an $\alpha$-stable random variable, independent of $H$. It is not difficult to see that $\{H_{M\,t^{1/\alpha}}\}_{t \geq 0}$ is another non-decreasing strong IDT process, and its associated stable tail dependence function is given by
\begin{gather*}
\ell_{\alpha}(\vec{t})=\ell\big(t_1^{1/\alpha},\,t_2^{1/\alpha},\ldots\big)^{\alpha},\quad \vec{t}=(t_1,\,t_2,\ldots) \in [0,\infty)^{\IN}_{00},
\end{gather*} 
satisfying $\ell_{\alpha} \in \mathfrak{L}$, since $\ell_{\alpha}(1,0,0,\ldots)=1$. With the constant $c_{\alpha}:=\Gamma(1-\alpha)^{-1/{\alpha}}$, a Le Page series representation for this stochastic process is given by
\begin{gather*}
\{H_{M\,t^{1/\alpha}}\}_{t \geq 0} \stackrel{d}{=}\Big\{\sum_{k \geq 1}-\log\big( \tilde{F}^{(k)}_{\frac{\epsilon_1+\ldots+\epsilon_k}{t}-}\big)\Big\}_{t \geq 0},
\end{gather*}
where $\tilde{F}^{(k)}$ are independent copies of $\tilde{F}_t:= \exp(-H_{c_{\alpha}\,t^{-1/\alpha}})$. We notice that each realization of $\tilde{F}$ is an element of $\F$ (though not necessarily of $\F_1$), since
\begin{gather*}
\IE\Big[ \int_0^{\infty}1-\tilde{F}_t\,\mathrm{d}t\Big] = \int_0^{\infty}1-e^{-c_{\alpha}\,t^{-1/\alpha}}\,\mathrm{d}t=1<\infty.
\end{gather*}
The canonical Le Page series representation is obtained when changing from $\tilde{F}$ to $F$, where $F_t := \tilde{F}_{M_{\tilde{F}}\,t}$, and additionally changing measure like in Remark \ref{rmk}(f). As a final remark, if $\ell=\ell_{\Pi}$, meaning $H_t=t$, then $\ell_{\alpha}=\ell_{F_{\alpha}} \in \partial_e \mathfrak{L}$ corresponds precisely to the well known logistic model based on the Fr\'echet distribution function $F_{\alpha}(x)=\exp(-c_{\alpha}\,x^{-1/\alpha}) \in \F_1$.
\end{example}

\begin{example}[A convenient umbrella for many well known models]
As already highlighted in \cite{mai17}, some well known models correspond to extremal points $\ell_F \in \partial_e \mathfrak{L}$, for instance the logistic model (see previous example) and the negative logistic model. However, Example \ref{exlevy} shows that the popular Marshall-Olkin model, resp.\ the infinite exchangeable subfamily thereof, is not an extremal element in general. With the motivation to establish an analytically tractable umbrella for many well known parametric models, in particular including both $\partial_e \mathfrak{L}$ and Example \ref{exlevy}, a rich semi-parametric specification for $\ell \in \mathfrak{L}$ can be constructed as follows. For $F \in \F_1$ the function $\Psi_F(z):=\int_0^{\infty}1-F^z(t)\,\mathrm{d}t$ defines a Bernstein function with $\Psi_F(1)=1$, see \cite[Lemma 3]{mai17}. This implies that for arbitrary $z \in (0,\infty)$ the function $F_z(x):= F(x\,\Psi_F(z))^z$ lies again in $\F_1$. With a probability law $\rho$ on $(0,\infty)$ we see $\ell_{\rho,F} := \int_{(0,\infty)}\ell_{F_z}\,\rho(\mathrm{d}z) \in \mathfrak{L}$. Many parametric models from the literature are comprised by this construction. In particular, the elements $\ell_F \in \partial_e \mathfrak{L}$ correspond to $\rho = \delta_{1}$ by construction, and Example \ref{exlevy} corresponds to the special case when $F(x)=\exp(-1)+(1-\exp(-1))\,1_{\{x \geq 1/(1-\exp(-1))\}}$ is held fix, but $\rho$ is varied, corresponding to the law of $\Theta$.
\end{example}

\begin{example}[Constructing $\ell$ via inclusion-exclusion]
Let $\ell_X \in \mathfrak{L}$ arbitrary and assume that $\vec{X}$ is min-stable multivariate exponential with stable tail dependence function $\ell_X$. Now consider $\vec{Y}$ with a spectral representation given in terms of $\vec{X}$, i.e.\ $-\log \{\IP(\vec{Y}>\vec{t})\}=\IE[\max_{k \geq 1}\{t_k\,X_k\}]$. Using the principle of inclusion and exclusion it is not difficult to compute the stable tail dependence function $\ell_Y$ of $\vec{Y}$, to wit
\begin{gather*}
\ell_Y(\vec{t}) = \sum_{k=1}^{d(\vec{t})}(-1)^{k+1}\,\sum_{1 \leq i_1<\ldots<i_k \leq d(\vec{t})}\ell_X\Big(\frac{1}{t_{i_1}},\ldots,\frac{1}{t_{i_k}},0,0,\ldots\Big)^{-1},
\end{gather*}
with $d(\vec{t}):=\max\{n \in \IN\,:\,t_n>0\}$ as in Example \ref{exlevy}. In particular, if $\ell_X=\ell_{F_{\alpha}}$ with $F_{\alpha}$ from Example \ref{ex_logistic} (logistic model), then $\ell_Y$ corresponds to a negative logistic model. The mapping $\ell_X \mapsto \ell_Y$ on $\mathfrak{L}$ seems to be ``association-increasing''. For instance, we observe that $\ell_Y^{(2)}(1,1)=2-\ell_X^{(2)}(1,1)^{-1} \leq \ell_X^{(2)}(1,1)$. Furthermore, maximal dependence $\ell_X(\vec{t})=\ell_Y(\vec{t})=\max_{k \geq 1}\{t_k\}$ is a fixpoint. It might potentially be interesting to study the relationship between the spectral representations of $\ell_X$ and $\ell_Y$.
\end{example}

\section{Conclusion}\label{sec_concl}
It has been shown that the set $\mathfrak{L}$ of infinite-dimensional, symmetric stable tail dependence functions is a simplex. The boundary of the simplex and a respective boundary integral representation for $\ell \in \mathfrak{L}$ has been derived in terms of a pair $(b,\mu)$ of a constant $b\in [0,1]$ and a probability measure $\mu$ on the set of distribution functions of non-negative random variables with unit mean. Equivalently, the pair $(b,\mu)$ was shown to conveniently describe the probability law of a non-decreasing, right-continuous stochastic process which is strongly infinitely divisible with respect to time, subject to a normalizing condition.

\section*{Acknowledgments}
Inspiring discussions with Paul Ressel and his helpful comments on earlier versions of this manuscript are gratefully acknowledged. His remarks in particular made me aware of the somewhat special role of the point $\ell_{\Pi}$. Helpful comments by the anonymous referees and the handling editor are also gratefully acknowledged.

\end{document}